\newtheorem{theorem}{Theorem}
\newtheorem{lemma}{Lemma}
\theoremstyle{remark}
\newcommand{\nat}{\mathbb{N}}
\begin{document}

\title{{\bf Sums of powers via integration}}

\author{M. Torabi Dashti}
\address{M. Torabi Dashti, Dept. Computer Science, ETH Z{\"u}rich, Switzerland.}

\begin{abstract}
Sum of powers $1^p+\cdots+n^p$, with $n,p\in\nat$ and $n\ge 1$, can be
expressed as a polynomial function of $n$ of degree $p+1$.  Such
representations are often called Faulhaber formulae. A simple
recursive algorithm for computing coefficients of Faulhaber formulae
is presented. The correctness of the algorithm is proved by giving a
recurrence relation on Faulhaber formulae.

\bigskip

\noindent
\emph{Keywords:} Faulhaber formulae; recurrence relation
\end{abstract}

\maketitle

\section{Introduction}
\label{sec:intro}
Define $f_p(n)=1^p+\cdots+n^p$, for $p\in\nat,n\in\nat^+$. One can
express $f_p(n)$ as a polynomial function of $n$ of degree $p+1$. For
example,
$1^2+\cdots+n^2=\frac{1}{3}n^3+\frac{1}{2}n^2+\frac{1}{6}n$. Such
representations are often called Faulhaber formulae, after Johann
Faulhaber (1580--1635). In this paper, we study the following problem:

\vspace{0.5ex}
\begin{quote}
For $p\in\nat$, find the Faulhaber
formula that represents $1^p+\cdots+n^p$.
\end{quote}
\vspace{0.5ex}

Let us write $f_p(n)=a_{p+1} n^{p+1}+a_n p^n+\cdots+a_1 n+a_0$; it can
be proved that $a_0=0$ for all~$p\in\nat$. Clearly finding the
Faulhaber formula, given any $p\in\nat$, can be reduced to finding the
corresponding coefficients:~$a_1,\cdots,a_{p+1}$.

A well-known relation between Faulhaber formulae and Bernoulli
numbers, that is $f_p(n)=\frac{1}{p+1}\sum_{i=0}^p
\binom{p+1}{i}\mathsf{b}_i n^{p+1-i}$ with $\mathsf{b}_i$ being the
$i^\mathit{th}$ Bernoulli number (when~$\mathsf{b}_1=+\frac{1}{2}$),
can be used for computing $a_1,\cdots,a_{p+1}$. This approach however
requires computing Bernoulli numbers $\mathsf{b}_0,
\cdots,\mathsf{b}_p$.

There are various algorithms in the literature for computing Bernoulli
numbers.  These algorithms are generally based on recurrence
relations, where $\mathsf{b}_i$ is computed using $\mathsf{b}_0,
\cdots,\mathsf{b}_{i-1}$, e.g.\ see~\cite{kb,ad}.  In this paper, we
give a recurrence relation on Faulhaber formulae, which yields a
direct algorithm for computing the coefficients $a_1,\cdots,a_{p+1}$.

\paragraph{Structure of paper.}
In section~\ref{sec:algorithm} we give a recursive algorithm for
computing coefficients of Faulhaber formulae. Correctness of the
algorithm is proved in section~\ref{sec:proof}. Time complexity of the
algorithm is also analysed in section~\ref{sec:proof}.

\section{Direct algorithm} 
\label{sec:algorithm}
We are interested in computing the coefficients of the Faulhaber
formula that describes $f_p(n)$, for a given $p\in\nat$.  Write
$f_p(n)=a_{p+1} n^{p+1}+\cdots+a_1 n$.
Let us consider a table in which rows refer to different values
of~$p$, and columns refer to the powers of~$n$. The element at the
intersection of row $i$ and column $j$, denoted $a_{(i,j)}$, is meant
to represent the coefficient of $n^j$ in the polynomial describing
$f_i(n)$. See figure~\ref{fig:int}. Note that elements at $(i,j)$ with
$j>i+1$ are all zero.

\begin{figure}[h]
\begin{center}
\begin{tabular}{|l||l|l|l|l|}
\hline
$i\downarrow\quad j\to$&{\bf 1} & {\bf 2} & {\bf 3} & {\bf 4}\\
\hline
\hline
{\bf 0}&1 & - & - & -\\
\hline
{\bf 1}&$\frac{1}{2}$ & $\frac{1}{2}$ & - & -\\
\hline
{\bf 2}&$\frac{1}{6}$ & $\frac{1}{2}$ & $\frac{1}{3}$&-\\
\hline
{\bf 3}&0 & $\frac{1}{4}$ & $\frac{1}{2}$ & $\frac{1}{4}$\\
\hline
\end{tabular}
\end{center}
\caption{Intuitive description of the direct algorithm}
\label{fig:int}
\end{figure}

Our goal is therefore to find the numbers in the row corresponding
to~$p$. We proceed inductively: First, row~$\mathbf{0}$ is filled,
then we fill row~$\mathbf{1}$, \ldots, till the row numbered
with~$p$ is filled.  The algorithm starts with placing $1$ at
position~$(0,1)$ in the matrix, thus filling row~$\mathbf{0}$. This step
reflects $f_0(n)=1^0+\cdots+n^0=n$.
In order to fill the $i^\mathit{th}$ row, with $i>0$, we follow the
rules below:
\begin{enumerate}
\item
For $1<j\le i+1$, let $a_{(i,j)}=\frac{i}{j} a_{(i-1,j-1)}$.
\item Next, we compute $a_{(i,1)}$ as
$a_{(i,1)}=1-\sum_{1<j\le i+1}a_{(i,j)}$. Put differently, $a_{(i,1)}$
is chosen such that the sum of the numbers that appear in each row
equals~1.
\end{enumerate}
The procedure stops when the row corresponding to~$p$ is
filled. Below, it is proved that $a_{(p,j)}$, for $1\le j\le p+1$, is
the coefficient of $n^j$ in the polynomial that represents $f_p(n)$.
Algorithm~\ref{alg-co} implements this procedure.

Note that, since filling row $i$ only requires elements of row $i-1$,
the algorithm only stores a vector, instead of the matrix of
figure~\ref{fig:int}.

\begin{algorithm}
\caption{Computes coefficients of Faulhaber formulae}
\begin{algorithmic}
\label{alg-co}
\REQUIRE $p\in\nat$
\STATE $a_{1}:=1$
\FOR{$(i:=1;\ i\le p;\ i{++})$}
\STATE s:=0
\FOR{$(j:=i+1;\ j>1;\ j{--})$}
\STATE $a_j:=\frac{i}{j} a_{j-1}$
\STATE $s:= s + a_j$
\ENDFOR
\STATE $a_1:=1-s$
\ENDFOR
\RETURN {$a_1,\cdots,a_{p+1}$}
\end{algorithmic}
\end{algorithm}

Using the table of figure~\ref{fig:int} and the presented algorithm, 
we make the following simple observations about coefficients of
Faulhaber formulae.
\begin{itemize}
\item The coefficient of $n^{p+1}$ in $f_p(n)$ is $\frac{1}{p+1}$, for
any $p\in\nat$. This can be proved by induction:
$a_{(p,p+1)}=\frac{p}{p+1}a_{(p-1,p)}$ and $a_{(1,2)}=\frac{1}{2}$.
\item The coefficient of $n^p$ in $f_p(n)$ is $\frac{1}{2}$, for any
$p\in\nat$. This can be proved by induction:
$a_{(p,p)}=\frac{p}{p}a_{(p-1,p-1)}$ and $a_{(1,1)}=\frac{1}{2}$.
\item The coefficient of $n^{p-2}$ in $f_p(n)$ is zero, for any $p\ge
3$. This can be proved by induction:
$a_{(p,p-2)}=\frac{p}{p-2}a_{(p-1,p-3)}$ and $a_{(3,1)}=0$.
\end{itemize}

\section{Recurrence relation on Faulhaber formulae}
\label{sec:proof}
In this section we prove that algorithm~\ref{alg-co} correctly
computes the coefficients of Faulhaber formulae. For this, first, a
recurrence relation on Faulhaber formulae is proved.

\begin{lemma}[Recurrence on Faulhaber formulae]
\label{main-thm}
\[
\begin{array}{l}
\bullet\ f_0(n)=n,\\ 
\bullet\ \mbox{for}\ p>0,\
f_{p}(n)=p\int^n_0 f_{p-1}(t)\ dt +
(1-p\int^1_0 
f_{p-1}(t)\ dt)n\ 
\end{array}
\]
\end{lemma}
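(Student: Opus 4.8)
The plan is to characterize the sum-polynomial $f_p$ by two properties that determine it uniquely among polynomials, and then to check that the right-hand side of the claimed recurrence enjoys exactly these properties. First I would record that, for every $p\in\nat$, the polynomial $f_p$ satisfies the difference identity $f_p(x)-f_p(x-1)=x^p$: this holds at every integer $n\ge 2$ by the definition $f_p(n)=1^p+\cdots+n^p$, and since both sides are polynomials, agreement at infinitely many points promotes it to an identity. Evaluating at $x=1$ and using $f_p(1)=1$ then gives $f_p(0)=0$. These two facts pin down $f_p$ uniquely among polynomials: if $h$ is a polynomial with $h(x)-h(x-1)\equiv 0$ then $h$ is constant (otherwise $h(x)-h(x-1)$ would have degree one less than $h$, hence nonnegative degree, a contradiction), and the value at $0$ fixes that constant. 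The base case $f_0(n)=n$ is then immediate, since $f_0$ merely counts its $n$ summands.

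For the second bullet, fix $p>0$ and let $g_p(n)$ denote the right-hand side, namely $g_p(n)=p\int_0^n f_{p-1}(t)\,dt+\bigl(1-p\int_0^1 f_{p-1}(t)\,dt\bigr)n$; note no induction on $p$ is needed, only the difference property of $f_{p-1}$ already established. I would verify the two characterizing properties for $g_p$. The boundary condition is immediate: $g_p(0)=0$. For the difference identity, a short computation gives $g_p(n)-g_p(n-1)=\phi(n)$, where $\phi(n)=p\int_{n-1}^n f_{p-1}(t)\,dt+1-p\int_0^1 f_{p-1}(t)\,dt$, the linear term contributing the constant $1-p\int_0^1 f_{p-1}$. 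It then remains to show $\phi(n)=n^p$.

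The crucial step --- and the place where the title's ``integration'' earns its keep --- is to differentiate $\phi$ rather than evaluate the integral in closed form. By the fundamental theorem of calculus, $\phi'(n)=p\bigl(f_{p-1}(n)-f_{p-1}(n-1)\bigr)$, and the difference property recorded for $f_{p-1}$ turns this into $\phi'(n)=p\,n^{p-1}=(n^p)'$. Since moreover $\phi(1)=1=1^p$, the two polynomials $\phi(n)$ and $n^p$ share a derivative and agree at a point, hence coincide. Thus $g_p$ satisfies $g_p(x)-g_p(x-1)\equiv x^p$ and $g_p(0)=0$, so by the uniqueness above $g_p=f_p$, which is exactly the claimed recurrence.

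I expect the main obstacle to be conceptual rather than computational: spotting that the forward difference of the integral term should be processed through the derivative identity $\phi'(n)=p\bigl(f_{p-1}(n)-f_{p-1}(n-1)\bigr)$, so that the characterizing difference property of $f_{p-1}$ can be fed in and the unknown integral $\int_0^1 f_{p-1}$ cancels. Everything else --- the boundary evaluations and the ``difference-zero implies constant'' uniqueness argument --- is routine.
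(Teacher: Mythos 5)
Your proof is correct, but it takes a genuinely different route from the paper. The paper proves the recurrence by direct computation with Bernoulli polynomials: it cites the identities $f_{p-1}(n)=\frac{1}{p}(B_p(n+1)-B_p(0))$, $\int_a^b B_i(t)\,dt=\frac{1}{i+1}(B_{i+1}(b)-B_{i+1}(a))$ and $B_i(n+1)-B_i(n)=in^{i-1}$, evaluates both integrals in closed form, and checks that everything collapses to $\frac{1}{p+1}(B_{p+1}(n+1)-B_{p+1}(0))=f_p(n)$. You instead characterize the Faulhaber polynomial $f_p$ intrinsically --- as the unique polynomial satisfying $f_p(x)-f_p(x-1)=x^p$ and $f_p(0)=0$ --- and then verify that the right-hand side $g_p$ satisfies this same difference equation and boundary condition, using the fundamental theorem of calculus to reduce the forward difference of the integral term to the difference property of $f_{p-1}$. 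Your argument is more self-contained: it needs no facts about Bernoulli polynomials, only that polynomials agreeing at infinitely many points coincide, that a polynomial with vanishing forward difference is constant, and the fundamental theorem of calculus. What the paper's route buys in exchange is brevity and an explicit closed form ($f_p$ in terms of $B_{p+1}$), which connects the lemma to the classical Bernoulli-number formula mentioned in the introduction. Both arguments share one presupposition, namely that $f_p$ extends to a polynomial at all; this is legitimate here since the paper states it in the introduction and the lemma's statement (integrating $f_{p-1}$ over real intervals) already requires it, but it is worth being aware that your uniqueness argument rests on that existence fact rather than reproving it.
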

\begin{proof}
The first part, $f_0(n)=1^0+\cdots+2^0=n$, can be proved by
straightforward induction. In the following, therefore, we assume
$p>0$. To prove the second part, we recall the following relations
(e.g.\ see~\cite[chapter 23]{mhandbook}):
\begin{enumerate}[I]
\item $f_{p-1}(n)=\frac{1}{p}(B_{p}(n+1)-B_{p}(0))$, for $p\in\nat^+$.
\item 
$\int_a^b B_i(t) dt = \frac{1}{i+1}(B_{i+1}(b)-B_{i+1}(a))$.
\item
$B_i(n+1)-B_i(n)=i n^{i-1}$, for $n\in\nat, i>1$.
\end{enumerate}
where $B_i(t)$ is the $i^\mathit{th}$ Bernoulli polynomial. 
Using (I) and (II), we get:
\[
\begin{array}{l}
\int_a^b p f_{p-1}(t)\ dt\\
=\int_a^b B_{p}(t+1)-B_{p}(0)\ dt\\
=\frac{1}{p+1}(B_{p+1}(b+1)-B_{p+1}(a+1))- B_p(0) (b-a)
\end{array}
\]
Therefore
\[
\begin{array}{l}
1-\int^1_0 p f_{p-1}(t)\ dt\\
=1-\frac{1}{p+1}(B_{p+1}(2)-B_{p+1}(1))+ B_p(0)\\
=B_p(0)
\end{array}
\]
The last simplification step is due to~(III). Note that since $p>0$, we
have $p+1>1$, satisfying the precondition of (III).
Similarly
\[
\begin{array}{l}
\int^n_0 p f_{p-1}(t)\ dt\\
=\frac{1}{p+1}(B_{p+1}(n+1)-B_{p+1}(1))-n B_p(0)
\end{array}
\]
As a result
\[
\begin{array}{l}
\int^n_0 p f_{p-1}(t)\ dt + (1-\int^1_0 p
f_{p-1}(t)\ dt)n\\
=\frac{1}{p+1}(B_{p+1}(n+1)-B_{p+1}(1))\\
=\frac{1}{p+1}(B_{p+1}(n+1)-B_{p+1}(0))
\end{array}
\]
The last simplification step is again due to~(III):
$B_{p+1}(1)-B_{p+1}(0)=0$, with $p+1>1$.
Finally
\[
\begin{array}{l}
f_p(n) \\
= \frac{1}{p+1}(B_{p+1}(n+1)-B_{p+1}(0))\\
=\int^n_0 p f_{p-1}(t)\ dt + (1-\int^1_0 p
f_{p-1}(t)\ dt)n
\end{array}
\]
This completes the proof.
\end{proof}

Now, we are ready to prove the correctness of algorithm~\ref{alg-co}.
\begin{theorem}[Correctness]
Given $p\in\nat$, algorithm~\ref{alg-co} outputs the
coefficients of the Faulhaber formula that represents $f_p(n)$.
\end{theorem}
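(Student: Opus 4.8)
The plan is to prove the statement by induction on $p$, using the recurrence of Lemma~\ref{main-thm} to translate each analytic operation into an operation on coefficients. Write $F_i(n)=\sum_{j=1}^{i+1} a_{(i,j)}\,n^j$ for the polynomial built from the entries the algorithm produces in row $i$; the goal is to show $F_i=f_i$ for every $i\le p$. The key structural observation, which I would record first, is that $F_i$ has no constant term (its entries range over columns $j\ge 1$). This is exactly what lets the algorithm's two rules be read off separately from the two summands of the recurrence. For the base case $i=0$, the initialization sets $a_{(0,1)}=1$ and leaves the rest of the row zero, so $F_0(n)=n$, matching the first clause of Lemma~\ref{main-thm}.

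For the inductive step, I would assume $F_{i-1}=f_{i-1}$, that is $f_{i-1}(t)=\sum_{j=1}^{i} a_{(i-1,j)}\,t^j$, and compute the integral summand at the level of coefficients:
\[
i\int_0^n f_{i-1}(t)\,dt=\sum_{j=1}^{i}\frac{i}{j+1}\,a_{(i-1,j)}\,n^{j+1}=\sum_{k=2}^{i+1}\frac{i}{k}\,a_{(i-1,k-1)}\,n^{k},
\]
where the second equality is just the reindexing $k=j+1$. The coefficient of $n^k$ is $\frac{i}{k}\,a_{(i-1,k-1)}$, which is precisely rule~(1). Because $f_{i-1}$ has no constant term, this integral contributes only the powers $n^2,\dots,n^{i+1}$ and nothing to $n^1$; hence the entire $n^1$ coefficient of $f_i$ must come from the linear correction term $\bigl(1-i\int_0^1 f_{i-1}(t)\,dt\bigr)n$.

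It then remains to match that correction term with rule~(2). Evaluating the same integral at $n=1$ and substituting rule~(1) gives
\[
i\int_0^1 f_{i-1}(t)\,dt=\sum_{k=2}^{i+1}\frac{i}{k}\,a_{(i-1,k-1)}=\sum_{1<k\le i+1} a_{(i,k)},
\]
so the coefficient of $n^1$ in $f_i$ equals $1-\sum_{1<k\le i+1} a_{(i,k)}$, which is exactly rule~(2). Combining the two displays, the coefficients of $f_i(n)$ prescribed by Lemma~\ref{main-thm} coincide term by term with the entries $a_{(i,1)},\dots,a_{(i,i+1)}$ the algorithm writes in row $i$, so $F_i=f_i$ and the induction closes; taking $i=p$ yields the claim.

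I expect the only real obstacle to be bookkeeping rather than anything conceptual: getting the index shift $k=j+1$ right, and, above all, using the vanishing of the constant term to guarantee that the integral and the linear correction do not both feed into the $n^1$ coefficient. Once that separation is in place, rules~(1) and~(2) are literally the coefficient-wise readings of the two summands of the recurrence, and the verification is mechanical.
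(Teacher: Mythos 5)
Your proof is correct, and for the coefficients of $n^j$ with $j>1$ it coincides with the paper's argument: both obtain rule~(1), $a_{(i,j)}=\frac{i}{j}a_{(i-1,j-1)}$, by integrating $f_{i-1}$ term by term inside the recurrence of Lemma~\ref{main-thm}. Where you genuinely differ is rule~(2). The paper does not consult the recurrence there at all: it observes that $f_p(1)=1$, so the coefficients must sum to $1$, whence $a_1=1-(a_{p+1}+\cdots+a_2)$; note that this silently uses $a_0=0$, a fact the paper only asserts in the introduction. You instead stay inside the recurrence: since $f_{i-1}$ has no constant term, the integral summand contributes nothing to $n^1$, so the $n^1$ coefficient of $f_i$ is exactly the correction factor $1-i\int_0^1 f_{i-1}(t)\,dt$, which you rewrite as $1-\sum_{1<k\le i+1}a_{(i,k)}$ using the rule-(1) values already in hand. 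Your route buys self-containment: a single explicit induction resting only on Lemma~\ref{main-thm}, which moreover establishes along the way that every row's polynomial has vanishing constant term --- i.e.\ it proves the $a_0=0$ claim rather than assuming it, and that vanishing is precisely what legitimizes your separation of the two summands. The paper's route buys brevity and makes the invariant behind rule~(2) transparent (each row sums to $1$ because $f_p(1)=1$). The two are close cousins, of course: setting $n=1$ in the recurrence gives $f_i(1)=i\int_0^1 f_{i-1}(t)\,dt+\bigl(1-i\int_0^1 f_{i-1}(t)\,dt\bigr)=1$, so your verification of rule~(2) is the paper's normalization argument rederived from within the lemma. A further small merit of your write-up is that the inductive scaffolding (the hypothesis $F_{i-1}=f_{i-1}$ is what licenses substituting the row-$(i-1)$ entries into the integral) is explicit, whereas the paper leaves it unstated.
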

\begin{proof}
Let us assume the coefficient of $n^j$ is $\alpha$ in $f_i(n)$, for
some~$1<j\le i+1$, and the coefficient of $n^{j-1}$ in $f_{i-1}(n)$
is~$\beta$. From the recurrence relation of lemma~\ref{main-thm}, we
get $\alpha=\frac{i}{j}\beta$. This is simply because $\int_0^n t^k\
dt= \frac{1}{k+1} n^{k+1}$, where $k$ is a any positive rational
number. This directly results in the way  algorithm~\ref{alg-co}
recursively computes $a_{(i,j)}$, for $1<j\le i+1$, and $i>0$.

Now, note that $f_p(1)=1$, for any $p\in\nat$. Moreover, note that
$f_p(1)=a_{p+1}+\cdots+a_1$. Therefore,
$a_1=1-(a_{p+1}+\cdots+a_2)$. This immediately results in the way
algorithm~\ref{alg-co} computes $a_{(i,1)}$, for $i>0$.
\end{proof}

Below, we turn to time complexity of algorithm~\ref{alg-co}.  To
measure the computational complexity, we count the number of
multiplication and addition (or, subtraction) operations that are
performed on rational numbers.  Assignments to constants, and
incrementing and decrementing natural numbers (i.e.\ counters in the
algorithm) are thus assumed to take negligible time.

\begin{theorem}[Time complexity]
Time complexity of the direct algorithm is quadratic in $p$.
\end{theorem}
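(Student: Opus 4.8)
The plan is to count the arithmetic operations performed by Algorithm~\ref{alg-co} as a function of the iteration variable $i$, and then sum over all iterations of the outer loop. The theorem claims the total is quadratic in $p$, so I expect a count on the order of $p^2$ operations.

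First I would analyse the cost of a single iteration of the outer \texttt{for} loop for a fixed value of $i$. Inside this iteration, the inner loop runs over $j=i+1,i,\ldots,2$, which is exactly $i$ iterations. In each inner iteration the algorithm performs one multiplication (computing $a_j:=\frac{i}{j}a_{j-1}$) and one addition (updating $s:=s+a_j$); I would treat the formation of the coefficient $\frac{i}{j}$ as a single multiplication by a constant, consistent with the convention stated just before the theorem that manipulating the counters $i$ and $j$ is free. Thus the inner loop contributes $2i$ counted operations. After the inner loop, the statement $a_1:=1-s$ adds one further subtraction. Hence the $i$-th iteration of the outer loop costs $2i+1$ operations.

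Next I would sum this per-iteration cost over the outer loop, which runs for $i=1,2,\ldots,p$. This gives a total of
\[
\sum_{i=1}^{p}(2i+1)=2\cdot\frac{p(p+1)}{2}+p=p^2+2p
\]
counted operations, to which one may add the single initial assignment $a_1:=1$ (which, being an assignment to a constant, costs nothing under the stated model). Since $p^2+2p=\Theta(p^2)$, the time complexity is quadratic in $p$, as claimed.

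The main point to be careful about, rather than a genuine obstacle, is fixing the cost model precisely: one must decide that each rational multiplication and each rational addition or subtraction counts as a single unit operation, independent of the sizes of numerators and denominators. This is exactly the convention announced in the paragraph preceding the theorem, so the argument is a clean summation once that model is adopted; the only subtlety is ensuring that the counting of the inner loop length and the handling of the constant $1$ in $a_1:=1-s$ are consistent with this model.
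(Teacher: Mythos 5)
Your proof is correct and follows essentially the same approach as the paper: both count one multiplication and one addition per inner-loop iteration (the inner loop running $i$ times for outer index $i$, hence $\frac{1}{2}p(p+1)$ iterations in total) plus the $p$ subtractions $a_1:=1-s$, arriving at $\Theta(p^2)$ operations. The only difference is bookkeeping—you sum a per-iteration cost of $2i+1$ over $i$, while the paper tallies additions and multiplications separately—and the totals agree.
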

\begin{proof}
Note that the outer \textbf{for} loop is repeated $p$ times, and the
inner \textbf{for} loop is repeated $1+\cdots+p=\frac{1}{2} p (p+1)$
times. It is straightforward to see that, for $p\in\nat$, the number
of addition operations is $\frac{1}{2}p(p+1)+p$, and the number of
multiplication operations that are performed on rational numbers is
$\frac{1}{2}p(p+1)$.
\end{proof}

\bibliographystyle{alpha}
\bibliography{bib}

\end{document}